\newtheorem{theorem}{Theorem}
\newtheorem{definition}{Definition}
\newtheorem{assumption}{Assumption}
\begin{document}

\title{\LARGE \bf
$H_2$ Performance Analysis and Synthesis for Discrete-Time Linear Systems with Dynamics Determined by an i.i.d.\ Process}

%\author{ \parbox{3 in}{\centering Huibert Kwakernaak*
%         \thanks{*Use the $\backslash$thanks command to put information here}\\
%         Faculty of Electrical Engineering, Mathematics and Computer Science\\
%         University of Twente\\
%         7500 AE Enschede, The Netherlands\\
%         {\tt\small h.kwakernaak@autsubmit.com}}
%         \hspace*{ 0.5 in}
%         \parbox{3 in}{ \centering Pradeep Misra**
%         \thanks{**The footnote marks may be inserted manually}\\
%        Department of Electrical Engineering \\
%         Wright State University\\
%         Dayton, OH 45435, USA\\
%         {\tt\small pmisra@cs.wright.edu}}
%}

\author{Yohei Hosoe, Takashi Okamoto and Tomomichi Hagiwara% <-this % stops a space
\thanks{This work was supported in part by JST PRESTO Grant Number JPMJPR2127 and JSPS KAKENHI Grant Number 20K04546.}% <-this % stops a space
\thanks{Y.~Hosoe, T.~Okamoto and T.~Hagiwara are with the Department of Electrical Engineering, 
        Kyoto University, Nishikyo-ku, Kyoto 615-8510, Japan
        {\tt\small hosoe@kuee.kyoto-u.ac.jp}}%
}

\maketitle
\thispagestyle{empty}
\pagestyle{empty}

%%%%%%%%%%%%%%%%%%%%%%%%%%%%%%%%%%%%%%%%%%%%%%%%%%%%%%%%%%%%%%%%%%%%%%%%%%%%%%%%
\begin{abstract}

This paper is concerned with $H_2$ control of discrete-time linear systems with dynamics determined by an independent and identically distributed (i.i.d.) process.
A definition of $H_2$ norm is first discussed for the class of systems.
Then, a linear matrix inequality (LMI) condition is derived for the associated performance analysis, which is tractable in the sense of numerical computation.
The results about analysis are also extended toward state-feedback controller synthesis.

\end{abstract}

%%%%%%%%%%%%%%%%%%%%%%%%%%%%%%%%%%%%%%%%%%%%%%%%%%%%%%%%%%%%%%%%%%%%%%%%%%%%%%%%
\section{Introduction}

This paper studies
$H_2$ control of discrete-time linear stochastic systems whose dynamics is determined by an independent and identically distributed (i.i.d.) process.
This class of systems are also known as discrete-time linear systems with white parameters \cite{Koning1992whitepara} since "i.i.d." implies "white" in discrete time.
The systems with an i.i.d.\ process can be seen as a discrete-time linear case of random dynamical systems \cite{Arnold-book}; see also \cite{Hosoe-TAC22} for discrete-time linear systems with general stochastic dynamics, which properly include the present systems.
The systems with an i.i.d.\ process are also closely related with switched linear systems \cite{hespanha2004uniform, lee2006uniform}, and indeed, can be viewed as those with the switching signal given by an i.i.d. signal (whose support may be uncountable).
This is a relationship similar to that between switched systems and Markov jump systems \cite{Costa-book, lee2006uniform}, where the switching signal is given by a Markov chain.

In our earlier study \cite{Hosoe-TAC22}, 
relations of several notions of second-moment stability (i.e., mean square stability) and associated Lyapunov inequalities were discussed for discrete-time linear systems with a general stochastic process.
The Lyapunov inequalities about second-moment stability of any class of stochastic systems can be theoretically unified in the framework developed in this earlier study {\it provided that the systems are discrete-time and linear}, and hence, the associated results could have a large impact in the related fields.
For example, the Lyapunov inequality for the i.i.d.\ case in \cite{ogura2013generalized, Hosoe-TAC19} and that for the Markovian case in \cite{Costa-book, Costa-TAC14} are naturally covered in the framework as special cases.
Their extensions such as periodic versions are also not an exception.
As stated in \cite{Hosoe-TAC22}, however, the most general form of the Lyapunov inequality (i.e., that for the most general stochastic systems) involves conditional expectation operations that are difficult to numerically compute, and is not suitable for practical control problems.
%Hence, when tacking practical problems, it is needed to confine ourselves to some special case of stochastic systems (at least at this moment).
The systems with an i.i.d.\ process are one of the tractable classes of systems in the sense of numerical computations, as well as Markov jump systems.
To pursue the usefulness of the proposed framework, we have been also studying practical linear matrix inequality (LMI) conditions for control of 
the systems with an i.i.d.\ process.
LMI conditions for $H_{2}$ performance analysis and synthesis derived in this paper correspond to results in such a direction.

The control theory for the i.i.d.\ case is compatible with networked control systems (NCSs) with randomly time-varying communication delays.
Although details are omitted, our preliminary experiments using the Internet suggested that 
i.i.d.\ processes are prospective as a model of actual communication delays.
Motivated by this, the i.i.d.\ results in \cite{Hosoe-TAC19} about stabilization were exploited in \cite{Hosoe-CDC22}  about NCSs with random communication delays.
One of the advantages of using our i.i.d.\ results is that we can directly handle unbounded spaces for the values of delays in controller synthesis even when the plant in the NCS is unstable; most of the earlier studies assumed the existence of an upper bound for the delays even when they are random (e.g., in \cite{zhang2005new,shi2009output}).
Our stabilization results have been already applied to the remote control of vehicles in \cite{Kameoka-ROCOND22}, in which an experiment using an actual plug-in hybrid electric vehicle and the Internet is reported.
Deriving LMI conditions not only for stabilization but also for $H_2$ control in the present paper
is expected to contribute to improving the associated control performance,
as in the cases with deterministic systems \cite{Boyd-book} and Markov jump systems \cite{morais2013h2}.

This paper is organized as follows.
Section~\ref{sc:sys-norm} describes
the systems to be dealt with in this paper and discusses a definition of $H_2$ norm.
Section~\ref{sc:anal} studies $H_2$ performance analysis as a step toward $H_2$ control.
Then, Section~\ref{sc:syn} extends the results about analysis and derives an LMI condition for controller synthesis.
Section~\ref{sc:exam} provides a simple numerical example, and Section~\ref{sc:concl} concludes the paper.

We use the following notation in this paper.
The set of real numbers,
that of positive real numbers,
that of positive integers
and that of non-negative integers
are denoted by ${\bf R}$, ${\bf R}_+$, ${\bf N}$ and ${\bf N}_0$, respectively.
The set of $n$-dimensional real
column vectors and that of $m\times n$ real matrices are denoted by
${\bf R}^n$ and ${\bf R}^{m\times n}$, respectively.
The set of $n\times n$ symmetric matrices and
that of $n\times n$ positive definite matrices 
are denoted by ${\bf S}^{n\times n}$ and
${\bf S}^{n\times n}_{+}$, respectively.
The identity matrix of size $n$ is denoted by $I_n$; the subscript will be
dropped when the size is obvious.
%$\sigma_{\rm max}(\cdot)$ and $\sigma_{\rm min}(\cdot)$
%denote the maximum and minimum singular values of the matrix $(\cdot)$, respectively.
The maximum singular value is denoted by $\sigma_{\rm max}(\cdot)$.
The Euclidean norm
is denoted by $\|\cdot\|$.
The trace of a matrix is denoted by ${\rm tr}(\cdot)$.
The vectorization of a matrix in the row
direction is denoted by ${\rm row}(\cdot)$, i.e., ${\rm row}(\cdot):=[{\rm row}_1(\cdot),\ldots,{\rm
row}_m(\cdot)]$, where $m$ is the number of rows of the matrix and ${\rm row}_i(\cdot)$ denotes the $i$th row.
The Kronecker product is denoted by $\otimes$.
%The (block) diagonal matrix is denoted by ${\rm diag}(\cdot)$.
%$\delta(\cdot)$ denotes the Kronecker delta.
%For the real square matrix $M$, ${\rm He}(M):=M+M^T$, where $M^T$ denotes the
%transpose of $M$.
The expectation of a random variable is denoted by $E[\cdot]$; this notation
is also used for the expectation of a random matrix.
If $s$ is a random variable obeying the distribution ${\cal D}$,
then we represent it as $s \sim {\cal D}$.

\section{Discrete-Time Linear Systems with Dynamics Determined by an i.i.d.\ Process and $H_2$ Norm}
\label{sc:sys-norm}

\subsection{System Class}

Let us consider the $Z$-dimensional discrete-time stochastic process $\xi=\left(\xi_k\right)_{k\in {\bf N}_0}$ satisfying
the following assumption.

\begin{assumption}
\label{as:iid}
The random vector
$\xi_k$ is independent and identically distributed (i.i.d.) with
 respect to the discrete time $k\in {\bf N}_0$.
\end{assumption}

The time-invariant support of $\xi_k$ satisfying this assumption is denoted by ${\boldsymbol {\mathit\Xi}}\ (\subset {\bf R}^Z)$.
The process $\xi$ satisfying the above assumption is obviously stationary and ergodic \cite{Klenke-book}.

With such a process $\xi$, let us further consider the discrete-time linear system $G$ represented by
\begin{align}
&
x_{k+1} = A(\xi_k) x_k +  B(\xi_k) w_k, \notag\\
&
z_k = C(\xi_k) x_k +  D(\xi_k) w_k,\label{eq:fr-sys}
\end{align}
where
$x_k$, $w_k$ and $z_k$ are the state, the input and the output, respectively.
The initial state $x_0$ is supposed to be deterministically given.
In addition,
$A:{\boldsymbol {\mathit\Xi}} \rightarrow
{\bf R}^{n\times n}$,
$B:{\boldsymbol {\mathit\Xi}} \rightarrow
{\bf R}^{n\times p_w}$,
$C:{\boldsymbol {\mathit\Xi}} \rightarrow
{\bf R}^{q_z\times n}$ and
$D:{\boldsymbol {\mathit\Xi}} \rightarrow
{\bf R}^{q_z\times p_w}$
are matrix-valued Borel functions satisfying the following assumption.

\begin{assumption}
\label{as:bound}
The squares of entries of 
$A(\xi_0)$ are all Lebesgue integrable, i.e.,
\begin{align}
&
E[A_{ij}(\xi_0)^2]<\infty\ \ (\forall i, j = 1,\ldots,n).
\label{eq:as-bound}
\end{align}
Similarly, the squares of entries of
$B(\xi_0)$, $C(\xi_0)$ and $D(\xi_0)$ are also all Lebesgue integrable.
\end{assumption}

The condition (\ref{eq:as-bound}) is known as a minimal requirement for defining second-moment stability \cite{Hosoe-TAC22}.
No other structural constraints are imposed on the functions $A$, $B$, $C$ and $D$ as well as the distribution of $\xi_k$ in this paper.

\subsection{$H_2$ Norm}

As a stability notion for the aforementioned class of systems, we use the following second-moment exponential stability, which is also called mean square exponential stability \cite{Kozin-Auto69}.

\begin{definition}
\label{df:expo}
The system $G$ satisfying
Assumptions~\ref{as:iid} and \ref{as:bound}
is said to be exponentially stable
 in the second moment if 
 for $w\equiv 0$,
 there exist $a\in {\bf R}_+$ and $\lambda \in
 (0,1)$ such that
\begin{align}
&
\sqrt{E[||x_k||^2]} \leq a ||x_0|| \lambda^k\ \ \ (\forall k \in {\bf
N}_0, \forall x_0 \in {\bf R}^n).
\label{eq:exp-def}
\end{align}
\end{definition}

This stability notion is known to be characterized by a Lyapunov inequality as in the following theorem \cite{Hosoe-TAC19}.

\begin{theorem}
\label{th:stab}
Suppose that the system $G$ satisfies Assumptions~\ref{as:iid} and \ref{as:bound}.
The following two conditions are equivalent.
\begin{enumerate}
\item
The system $G$ is exponentially stable in the second moment.
\item
There exists $P\in {\bf S}_+^{n\times n}$ such that
\begin{align}
&
E[P - A(\xi_0)^T P A(\xi_0)]> 0. \label{eq:lyapunov-asm}
\end{align}
\end{enumerate}
\end{theorem}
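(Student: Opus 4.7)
The plan is to analyze the dynamics of the second-moment matrix $X_k := E[x_k x_k^T]$ (for $w \equiv 0$) via a linear operator on symmetric matrices, and then apply standard Lyapunov theory for such operators. Define
\begin{equation*}
\mathcal{L}(X) := E[A(\xi_0) X A(\xi_0)^T], \qquad \mathcal{L}^*(P) := E[A(\xi_0)^T P A(\xi_0)],
\end{equation*}
both of which are well-defined thanks to Assumption~\ref{as:bound}. The first key observation is that since $x_k$ depends only on $x_0$ and $\xi_0,\ldots,\xi_{k-1}$, Assumption~\ref{as:iid} makes $\xi_k$ independent of $x_k$, giving the recursion $X_{k+1} = \mathcal{L}(X_k)$, so $X_k = \mathcal{L}^k(x_0 x_0^T)$. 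Note also that $\mathcal{L}^*$ is the adjoint of $\mathcal{L}$ with respect to the trace inner product on $\mathbf{S}^{n\times n}$, and both operators are positive (map PSD matrices to PSD matrices).

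For the direction $(2) \Rightarrow (1)$, I would take $V(x) := x^T P x$ as a stochastic Lyapunov function. From (\ref{eq:lyapunov-asm}) there exists $\alpha \in (0,1)$ with $E[A(\xi_0)^T P A(\xi_0)] \leq \alpha P$ (choose $\alpha$ via the maximum eigenvalue of $P^{-1/2} \mathcal{L}^*(P) P^{-1/2}$). Conditioning on $x_k$ and using independence,
\begin{equation*}
E[V(x_{k+1})] = E\!\left[x_k^T \mathcal{L}^*(P) x_k\right] \leq \alpha\, E[V(x_k)],
\end{equation*}
so by induction $E[V(x_k)] \leq \alpha^k x_0^T P x_0$. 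Sandwiching by $\lambda_{\min}(P) \|x_k\|^2 \leq V(x_k) \leq \lambda_{\max}(P)\|x_k\|^2$ yields (\ref{eq:exp-def}) with $a = \sqrt{\kappa(P)}$ and $\lambda = \sqrt{\alpha}$.

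For the converse $(1) \Rightarrow (2)$, the strategy is to show exponential stability implies $\rho(\mathcal{L}) < 1$ and then construct $P$ as a Lyapunov sum. Choosing $x_0 = e_i$ gives $\mathrm{tr}(\mathcal{L}^k(e_i e_i^T)) = E[\|x_k\|^2] \leq a^2 \|e_i\|^2 \lambda^{2k}$; by linearity and the PSD rank-one decomposition $X_0 = \sum_j \mu_j v_j v_j^T$, together with the bound $\|M\|_F \leq \mathrm{tr}(M)$ for PSD $M$, this extends to an exponential decay $\|\mathcal{L}^k(X_0)\|_F \leq c \lambda^{2k}\|X_0\|_F$ on all of $\mathbf{S}^{n\times n}_+$, which by Jordan decomposition extends to all of $\mathbf{S}^{n\times n}$. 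Hence $\rho(\mathcal{L}) < 1$, and since $\mathcal{L}^*$ is the adjoint we also have $\rho(\mathcal{L}^*) < 1$. Then
\begin{equation*}
P := \sum_{k=0}^\infty (\mathcal{L}^*)^k(I)
\end{equation*}
converges in $\mathbf{S}^{n\times n}$, satisfies $P \succeq I \succ 0$, and telescopes to $P - \mathcal{L}^*(P) = I \succ 0$, which is exactly (\ref{eq:lyapunov-asm}).

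The main obstacle is the step of passing from exponential decay of $E[\|x_k\|^2]$ for all deterministic $x_0$ (which is a pointwise statement on rank-one PSD matrices) to $\rho(\mathcal{L}) < 1$ as an operator on the whole space $\mathbf{S}^{n\times n}$. The crux is exploiting positivity of $\mathcal{L}$, linearity, and equivalence of the trace and Frobenius norms on PSD matrices; once this operator-theoretic reformulation is in place, both implications reduce to classical discrete Lyapunov arguments.
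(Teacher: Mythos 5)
Your proof is correct. Note that the paper itself does not prove Theorem~\ref{th:stab}; it quotes the result from the cited reference, so there is no in-paper proof to compare against line by line. That said, your argument is the standard one and is fully consistent with the machinery the paper deploys elsewhere: your operator $\mathcal{L}^*$ and the Lyapunov sum $P=\sum_{k\ge 0}(\mathcal{L}^*)^k(I)$ are exactly the coordinate-free version of what the authors do concretely in Step~3 of the proof of Theorem~\ref{th:main-H2anal}, where they build $P'=\lim_{k_2\to\infty}\sum_{k=k_1}^{k_2}E[(\Phi_{k_1}^{k})^TQ\Phi_{k_1}^{k}]$ and verify the telescoping identity (\ref{eq:papaq}); likewise your $2\Rightarrow 1$ direction is the same stochastic Lyapunov-function argument the paper implicitly relies on, with the key independence step ($\xi_k$ independent of $x_k$, hence $E[V(x_{k+1})]=E[x_k^T\mathcal{L}^*(P)x_k]$) correctly justified by Assumption~\ref{as:iid}. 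The one step worth making explicit if you write this out in full is the passage from the trace bound on rank-one PSD inputs to the bound on the operator norm of $\mathcal{L}^k$ over all of ${\bf S}^{n\times n}$; your sketch (linearity over the spectral decomposition, equivalence of $\|\cdot\|_F$ and ${\rm tr}(\cdot)$ on the PSD cone, then the positive/negative part decomposition of a general symmetric matrix) is exactly the right way to close it, and it is also where the positivity of $\mathcal{L}$ is genuinely used. Your approach buys a cleaner statement ($P-\mathcal{L}^*(P)=I$ exactly) and makes the role of $\rho(\mathcal{L})<1$ transparent, at the cost of introducing the operator formalism that the paper deliberately avoids in favor of explicit state-transition products.
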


For stable systems, this paper defines an $H_2$ norm in the time domain.
Let us consider
\begin{align}
& 
\Phi_{k_1}^{k_2}:=
\begin{cases}
    I & (k_2=k_1) \\
    A(\xi_{k_2-1})A(\xi_{k_2-2})\ldots A(\xi_{k_1}) & (k_2\geq k_1+1)
\end{cases}
\end{align}
for $k_1$, $k_2 \in {\bf N}_0$ satisfying $k_2 \geq k_1$.
Then, the expectation of the square sum of the impulse response from $k=0$ to $k=K\ (K\in {\bf N})$ is described by
\begin{align}
&s_K:=
E\left[{\rm tr}\left(D(\xi_0)^TD(\xi_0\right)\right]\notag \\
&
+\sum_{k=1}^K
E\left[{\rm tr}\left( B(\xi_0)^T(\Phi_1^k)^T C(\xi_k)^T C(\xi_k) \Phi_1^k B(\xi_0)  \right)\right].
\label{eq:sK}
\end{align}
The convergence of this series is ensured by the following theorem.%, whose proof is given in the Appendix.
\begin{theorem}
\label{th:conv}
Suppose that the system $G$ satisfies
Assumptions~\ref{as:iid} and \ref{as:bound}.
If $G$ is exponentially stable in the second moment, then the corresponding $s_K$ in (\ref{eq:sK}) converges to a constant as $K$ tends to infinity.
\end{theorem}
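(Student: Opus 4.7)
The plan is to establish convergence by combining monotonicity with a geometric bound. Each summand in (\ref{eq:sK}) is the expectation of the trace of a matrix of the form $M^T M$ (hence positive semidefinite), so $s_K$ is monotonically nondecreasing in $K$, and the initial term $E[{\rm tr}(D(\xi_0)^T D(\xi_0))]$ is finite by Assumption~\ref{as:bound}. Convergence therefore reduces to uniform boundedness, and it suffices to show that the $k$th summand decays geometrically in $k$.

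First I would reduce the $k$th summand ($k\geq 1$) using cyclic invariance of the trace together with the independence structure from Assumption~\ref{as:iid}. Rewriting it as $E[{\rm tr}(C(\xi_k)^T C(\xi_k)\,\Phi_1^k B(\xi_0) B(\xi_0)^T (\Phi_1^k)^T)]$ and noting that $C(\xi_k)$ is independent of $\Phi_1^k$ and $B(\xi_0)$ (which depend only on $\xi_0,\xi_1,\ldots,\xi_{k-1}$), the $\xi_k$ expectation separates and yields ${\rm tr}(\widetilde{C}\,Y_k)$, where $\widetilde{C}:=E[C(\xi_0)^T C(\xi_0)]$ and $Y_k:=E[\Phi_1^k B(\xi_0) B(\xi_0)^T (\Phi_1^k)^T]$; Assumption~\ref{as:bound} ensures $\widetilde{C}$ and $Y_1=E[B(\xi_0)B(\xi_0)^T]$ are finite. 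Introducing the linear operator $\mathcal{L}(X):=E[A(\xi_0) X A(\xi_0)^T]$ on ${\bf S}^{n\times n}$, the identity $\Phi_1^{k+1}=A(\xi_k)\Phi_1^k$ together with the independence of $\xi_k$ from $\xi_0,\ldots,\xi_{k-1}$ gives the deterministic recursion $Y_{k+1}=\mathcal{L}(Y_k)$.

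For the geometric decay, I would invoke Theorem~\ref{th:stab} to obtain $P\in{\bf S}^{n\times n}_+$ satisfying (\ref{eq:lyapunov-asm}). Since $P-\mathcal{L}^*(P)> 0$, where $\mathcal{L}^*(X):=E[A(\xi_0)^T X A(\xi_0)]$ is the adjoint of $\mathcal{L}$ under the trace inner product, one can choose $\alpha\in(0,1)$ with $\mathcal{L}^*(P)\leq \alpha P$ (for instance $\alpha:=1-\lambda_{\min}(P-\mathcal{L}^*(P))/\lambda_{\max}(P)$). Because $Y_k\geq 0$, this yields ${\rm tr}(PY_{k+1})={\rm tr}(\mathcal{L}^*(P)Y_k)\leq \alpha\,{\rm tr}(PY_k)$, hence ${\rm tr}(PY_k)\leq \alpha^{k-1}{\rm tr}(PY_1)$. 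Combining with $\widetilde{C}\leq (\lambda_{\max}(\widetilde{C})/\lambda_{\min}(P))P$ bounds ${\rm tr}(\widetilde{C}Y_k)$ by a geometric sequence in $k$, which is summable, establishing uniform boundedness of $s_K$ and thereby convergence.

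The delicate point is the commutation identity $E[A(\xi_k)ZA(\xi_k)^T]=\mathcal{L}(E[Z])$ for any random matrix $Z$ independent of $\xi_k$, which is what converts the stochastic recursion for $\Phi_1^k B(\xi_0) B(\xi_0)^T (\Phi_1^k)^T$ into the deterministic recursion $Y_{k+1}=\mathcal{L}(Y_k)$. I would verify it by vectorization, using the Kronecker-product identity for $AZA^T$ and the i.i.d.\ property of $\xi$, with Assumption~\ref{as:bound} guaranteeing that the expectations and the interchange with vectorization are legitimate. Everything downstream is then routine linear-recursion algebra.
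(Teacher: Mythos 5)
Your proof is correct, but it reaches the key geometric bound by a different route than the paper. Both arguments share the same skeleton (each summand of $s_K$ is nonnegative, so convergence reduces to showing the $k$th term decays geometrically), but the paper stays on the ``observability side'': it bounds $E[(\Phi_1^{k})^T C(\xi_k)^T C(\xi_k)\Phi_1^{k}]\leq b^2\lambda^{2(k-1)}I$ directly from the decay estimate in Definition~\ref{df:expo} applied to the unforced state, using independence of $C(\xi_k)$ from $x_k$ and an i.i.d.\ index shift, and only then sandwiches with $B(\xi_0)$. You instead cycle the trace to form the controllability-Gramian-like quantity $Y_k=E[\Phi_1^k B(\xi_0)B(\xi_0)^T(\Phi_1^k)^T]$, convert the stochastic product into the deterministic recursion $Y_{k+1}=\mathcal{L}(Y_k)$, and extract the rate $\alpha<1$ from the Lyapunov certificate of Theorem~\ref{th:stab} via the adjoint inequality $\mathcal{L}^*(P)\leq\alpha P$. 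Your version invokes the nontrivial ``stability $\Rightarrow$ existence of $P$'' direction of Theorem~\ref{th:stab} (a heavier tool than the paper needs, though one it explicitly provides), in exchange for a cleaner, purely algebraic contraction argument that avoids reasoning about $\sigma_{\max}(C(\xi_0))$ and the $\Phi_0^k\to\Phi_1^k$ shift; the commutation identity $E[A(\xi_k)ZA(\xi_k)^T]=\mathcal{L}(E[Z])$ that you flag as the delicate point is indeed valid under Assumption~\ref{as:bound} by entrywise independence and Cauchy--Schwarz, and is of the same nature as the interchanges the paper itself performs. The only cosmetic caveat is that your explicit formula for $\alpha$ could equal $0$ in the degenerate case $\mathcal{L}^*(P)=0$; replacing it by $\max\{\alpha,1/2\}$ disposes of this.
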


\begin{proof}
Since the trace operation and the expectation operation are commutative, 
the sum (i.e., the second term) in the right-hand side of (\ref{eq:sK}) can be equivalently rewritten as
\begin{align}
&
{\rm tr}\left(
\sum_{k=1}^K
E\left[ B(\xi_0)^T(\Phi_1^k)^T C(\xi_k)^T C(\xi_k) \Phi_1^k B(\xi_0)  \right]\right).
\label{eq:SK}
\end{align}
Let us denote the sum in the trace operation in (\ref{eq:SK}) by $S_K^\prime\ (\geq 0)$.
Then, if $S_K^\prime$ converges to a constant matrix as $K\rightarrow \infty$, then $s_K$ also converges to a constant.
Hence, we show the convergence of $S_K^\prime$.

It follows from Assumptions~\ref{as:iid} and \ref{as:bound} that 
\begin{align}
E[x_k^TC(\xi_k)^TC(\xi_k)x_k]
=&
E\left[\|C(\xi_k)x_k\|^2\right] \notag \\
\leq&
E\left[\sigma_{\rm max}(C(\xi_k))^2 \|x_k\|^2\right] \notag \\
=&
E\left[\sigma_{\rm max}(C(\xi_k))^2\right] E\left[\|x_k\|^2\right]\notag \\
=&
E\left[\sigma_{\rm max}(C(\xi_0))^2\right] E\left[\|x_k\|^2\right] \notag \\
&(\forall k \in {\bf N}_0).
\end{align}
By Definition~\ref{df:expo}, this implies that there exist $a>0$ and $1>\lambda>0$ such that
\begin{align}
&
x_0^T E[(\Phi_0^{k})^T
C(\xi_k)^TC(\xi_k)
\Phi_0^{k}] x_0 \notag \\
\leq&
a^2 E\left[\sigma_{\rm max}(C(\xi_0))^2\right] 
 \lambda^{2k} (x_0^T x_0)\notag \\
&(\forall k \in {\bf N}_0, \forall x_0 \in {\bf R}^n).
\end{align}
This inequality can be rewritten as
\begin{align}
&
E[(\Phi_0^{k})^T
C(\xi_k)^TC(\xi_k)
\Phi_0^{k}] \notag \\
\leq &
a^2 E\left[\sigma_{\rm max}(C(\xi_0))^2\right]
 \lambda^{2k} I \ \ 
 (\forall k \in {\bf N}_0).
\end{align}
Since $\xi_k$ is i.i.d.\ with respect to $k$,
this further implies
\begin{align}
&
E[(\Phi_1^{k})^T
C(\xi_k)^TC(\xi_k)
\Phi_1^{k}] \notag \\
\leq &
a^2 E\left[\sigma_{\rm max}(C(\xi_0))^2\right]
 \lambda^{2(k-1)} I \ \ 
 (\forall k \in {\bf N}).
\end{align}
Multiplying $B(\xi_0)$ and its transpose and taking expectation for this inequality lead to
\begin{align}
&
E[B(\xi_0)^T (\Phi_1^{k})^T
C(\xi_k)^TC(\xi_k)
\Phi_1^{k} B(\xi_0)] \notag \\
\leq &
a^2 E\left[\sigma_{\rm max}(C(\xi_0))^2\right]
 \lambda^{2(k-1)} E[B(\xi_0)^T B(\xi_0)] \ \ 
 (\forall k \in {\bf N}).
\end{align}
The sum of the left-hand side of this inequality from $k=1$ to $k=K$ is nothing but $S_K^\prime$.
Hence,
\begin{align}
S_K^\prime 
\leq
a^2 E\left[\sigma_{\rm max}(C(\xi_0))^2\right]
 \left(\sum_{k=1}^{K}\lambda^{2(k-1)}\right) E[B(\xi_0)^T B(\xi_0)],
\end{align}
whose right-hand side converges to a constant matrix as $k\rightarrow \infty$.
This implies that $S_\infty^\prime$ is bounded.
Since the sequence of $S_K^\prime$
with respect to $K=1,2,\ldots$ is monotonically nondecreasing under the semiorder relation based on positive semidefiniteness (i.e., 
 $S_K^\prime \leq S_{K+1}^\prime$), $S_K^\prime$ also converges to a constant matrix as $K\rightarrow \infty$.
 This completes the proof.
\end{proof}

This theorem validates our defining the $H_2$ norm of the system $G$ as
\begin{align}
&\|G\|_2:=\sqrt{s_\infty}.
\label{eq:H2}
\end{align}
This definition is consistent with that for deterministic systems.
Indeed, if we consider the case where $\xi$ is given by a deterministic time-invariant constant process,
the $H_2$ norm defined in (\ref{eq:H2}) of the corresponding deterministic time-invariant $G$ coincides with the usual $H_2$ norm for deterministic time-invariant systems.
This paper deals with such an $H_2$ norm for stochastic systems.

\section{$H_2$ Performance Analysis}
\label{sc:anal}

\subsection{Expectation-Based Inequality Condition}

This section first proves the following theorem about $H_2$ analysis of stochastic systems, which is one of the main results in this paper.
\begin{theorem}
\label{th:main-H2anal}
Suppose that the system $G$ satisfies Assumptions~\ref{as:iid} and \ref{as:bound}.
For given $\gamma\in {\bf R}_+$,
the following two conditions are equivalent.
\begin{enumerate}
 \item 
 The system $G$ is exponentially stable in the second moment and satisfies $\|G\|_2<\gamma$.
\item
There exists $P\in {\bf S}^{n\times n}_+$ such that%
\begin{align}
&
E[P - A(\xi_0)^T P A(\xi_0) - C(\xi_0)^T C(\xi_0)]> 0, \label{eq:H2-anal-1}\\
& %\hspace{-7mm}
E[{\rm tr}(D(\xi_0)^T D(\xi_0) + B(\xi_0)^T P B(\xi_0))] < \gamma^2. \label{eq:H2-anal-2}
\end{align}
\end{enumerate}
\end{theorem}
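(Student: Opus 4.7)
My plan is to mirror the deterministic $H_2$ argument by constructing a stochastic observability Gramian tailored to the i.i.d.\ dynamics and relating it to the inequality via monotone iteration. First, I would introduce the operator ${\cal T}(X):=E[A(\xi_0)^T X A(\xi_0)]$ on ${\bf S}^{n\times n}$ and define $Q_k:=E[(\Phi_0^k)^T C(\xi_k)^T C(\xi_k) \Phi_0^k]$ with $Q_0=E[C(\xi_0)^T C(\xi_0)]$. Exploiting the i.i.d.\ property of $\xi$, the joint distribution of $(\xi_1,\ldots,\xi_k)$ coincides with that of $(\xi_0,\ldots,\xi_{k-1})$, and the independence of $\xi_0$ from later samples factors $A(\xi_0)$ out of the inner expectation after conditioning. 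This yields the linear recursion $Q_k={\cal T}(Q_{k-1})$, so the partial sums $\tilde P_K:=\sum_{k=0}^K Q_k$ satisfy $\tilde P_K = Q_0 + {\cal T}(\tilde P_{K-1})$.

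Next, I would rewrite the series~(\ref{eq:sK}) by the same shift-and-independence trick, now applied to $B(\xi_0)$: since $B(\xi_0)$ is independent of all factors forming $Q_{k-1}$, the $k$th summand in (\ref{eq:sK}) equals $\mathrm{tr}(E[B(\xi_0)^T Q_{k-1} B(\xi_0)])$. Theorem~\ref{th:conv} together with monotonicity in the positive semidefinite order then guarantees that $\tilde P_K$ converges to some $P^\star\succeq 0$ that satisfies the fixed-point identity $P^\star = Q_0 + {\cal T}(P^\star)$, and passing to the limit yields
\[
\|G\|_2^2 \;=\; E\!\left[\mathrm{tr}\bigl(D(\xi_0)^T D(\xi_0) + B(\xi_0)^T P^\star B(\xi_0)\bigr)\right].
\]

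For $(2)\Rightarrow(1)$, I would first observe that (\ref{eq:H2-anal-1}) implies $E[P - A(\xi_0)^T P A(\xi_0)]\succ 0$, so Theorem~\ref{th:stab} gives exponential stability. Rewriting (\ref{eq:H2-anal-1}) as $P\succeq Q_0 + {\cal T}(P)$ and using that ${\cal T}$ preserves the PSD order, a straightforward induction on $K$ gives $P\succeq \tilde P_K$ and hence $P\succeq P^\star$ in the limit; substituting into (\ref{eq:H2-anal-2}) and using the formula for $\|G\|_2^2$ from Step 2 delivers $\|G\|_2^2<\gamma^2$. For the converse $(1)\Rightarrow(2)$, I would build the perturbed Gramian $P_\epsilon := \sum_{k=0}^\infty {\cal T}^k(Q_0+\epsilon I)$ for small $\epsilon>0$. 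Its convergence follows from the identity ${\cal T}^k(I)=E[(\Phi_0^k)^T \Phi_0^k]$ and the exponential bound $E[(\Phi_0^k)^T \Phi_0^k]\preceq a^2\lambda^{2k}I$ supplied by Definition~\ref{df:expo}. Then $P_\epsilon\succeq \epsilon I\succ 0$ solves $(I-{\cal T})(P_\epsilon)=Q_0+\epsilon I$, which is exactly (\ref{eq:H2-anal-1}); since $P_\epsilon = P^\star + \epsilon \sum_{k=0}^\infty {\cal T}^k(I)$ depends continuously on $\epsilon$ with $\|G\|_2^2<\gamma^2$ at $\epsilon=0$, choosing $\epsilon$ small enough preserves (\ref{eq:H2-anal-2}).

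The principal obstacle I anticipate is the bookkeeping in Steps 1 and 2: carefully justifying the telescoping of expectations via the i.i.d.\ shift and independence structure, given that the $A(\xi_k)$ are only square-integrable rather than uniformly bounded. Assumption~\ref{as:bound} together with the exponential decay from Definition~\ref{df:expo} are precisely what make each expectation finite and the Gramian series absolutely convergent in norm; once the tower-property applications are properly laid out, the monotone-iteration arguments on both implications become routine.
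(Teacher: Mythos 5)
Your proposal is correct and follows the same architecture as the paper's proof: both directions hinge on the stochastic observability Gramian $P^\star=\sum_{k\geq 0}{\cal T}^k(Q_0)$ (the paper's $P^\prime$), its fixed-point identity, the exponential decay bounds extracted from Definition~\ref{df:expo}, and the representation $\|G\|_2^2=E[\mathrm{tr}(D(\xi_0)^TD(\xi_0)+B(\xi_0)^TP^\star B(\xi_0))]$. Your 2$\Rightarrow$1 argument (monotone operator iteration giving $P\succeq \tilde P_K$) is just the operator-language version of the paper's explicit telescoping of random-matrix products, so no substantive difference there. The one genuine divergence is in 1$\Rightarrow$2: the paper sets $P=P^\prime+\beta\Pi$ where $\Pi$ is a separate Lyapunov certificate supplied by Theorem~\ref{th:stab} and $\beta$ is computed explicitly (with a case split on whether $E[\sigma_{\max}(B(\xi_0))^2]$ vanishes to avoid dividing by zero), whereas you perturb inside the Gramian, $P_\epsilon=\sum_k{\cal T}^k(Q_0+\epsilon I)$, so that $(I-{\cal T})(P_\epsilon)=Q_0+\epsilon I$ delivers the strict inequality (\ref{eq:H2-anal-1}) directly and a continuity argument in $\epsilon$ handles (\ref{eq:H2-anal-2}); this avoids both the appeal to Theorem~\ref{th:stab} for $\Pi$ and the case distinction, at the cost of having to justify convergence of $\sum_k{\cal T}^k(I)$ (which your bound ${\cal T}^k(I)=E[(\Phi_0^k)^T\Phi_0^k]\preceq a^2\lambda^{2k}I$ does). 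One small imprecision: Theorem~\ref{th:conv} by itself guarantees convergence of the $B$-weighted traces $s_K$, not of $\tilde P_K$ as a matrix sequence; you should instead cite the bound $Q_k\preceq b^2\lambda^{2k}I$ (or, in the 2$\Rightarrow$1 direction, the bound $\tilde P_K\preceq P$ from your own induction) together with monotone boundedness, exactly as the paper does in its Step 3.
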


\medskip

\begin{proof}
For notational simplicity, we use
$A_k:=A(\xi_k)$, $B_k:=B(\xi_k)$, $C_k:=C(\xi_k)$ and $D_k:=D(\xi_k)$ in this proof.

\medskip

2$\Rightarrow$1:
It follows from (\ref{eq:H2-anal-1}) and Theorem~\ref{th:stab} that the system $G$
is exponentially stable in the second moment.
Hence, it suffices to show $\|G\|_2<\gamma$.

Since $\xi_k$ is i.i.d.\ with respect to $k$ by Assumption~\ref{as:iid},
the inequality (\ref{eq:H2-anal-1}) implies
\begin{align}
&
E[A_k^T P A_k + C_k^T C_k] < P\ \ (\forall k\in {\bf N}_0). \label{eq:H2-anal-1-kai}
\end{align}
Take $K\in {\bf N}$ and consider the case of $k=K$ in the above inequality.
Then,
by multiplying $A_{k-1}\ (k=K)$ and its transpose on
the inequality, we have
\begin{align}
&\hspace{-3mm}
A_{K-1}^T E[A_K^T P A_K + C_K^T C_K]A_{K-1}
\leq A_{K-1}^T P A_{K-1}.
\label{eq:ope1}
\end{align}
Since $A_{K-1}$ is a random matrix, taking expectation for both sides of this inequality,
together with using the independence between $\xi_{K-1}$ and $\xi_{K}$, leads to
\begin{align}
&E[(\Phi_{K-1}^{K+1})^T P \Phi_{K-1}^{K+1} + A_{K-1}^T C_K^T C_K A_{K-1}] \notag \\
\leq &E[A_{K-1}^T P A_{K-1}].
\end{align}
Adding $E[C_{k-1}^T C_{k-1}]\ (k=K)$ to both sides of this inequality and using
 (\ref{eq:H2-anal-1-kai}) further lead to
\begin{align}
&E[(\Phi_{K-1}^{K+1})^T P \Phi_{K-1}^{K+1} + A_{K-1}^T C_K^T C_K A_{K-1}
 \notag \\
& + C_{K-1}^T C_{K-1}] \leq P.\label{eq:sousakekka}
\end{align}
By repeating the operation from (\ref{eq:ope1}) to (\ref{eq:sousakekka}) for $k=K-1, K-2, \ldots$,
we finally obtain
\begin{align}
&\hspace{-3mm}
E\left[(\Phi_{1}^{K+1})^T P \Phi_{1}^{K+1} + \sum_{k=1}^{K} (\Phi_{1}^k)^T C_{k}^T C_{k} \Phi_{1}^k\right] \leq P.
\end{align}
It follows from this inequality and $E[(\Phi_{1}^{K+1})^T P \Phi_{1}^{K+1}]\geq 0$ that
\begin{align}
&
E\left[\sum_{k=1}^{K} (\Phi_{1}^k)^T C_{k}^T C_{k} \Phi_{1}^k\right] \leq P.
\end{align}
Multiplying $B_0$ and its transpose and taking expectation for
this inequality lead to
\begin{align}
&\hspace{-7mm}	
E\left[\sum_{k=1}^{K} B_0^T (\Phi_{1}^k)^T C_{k}^T C_{k}
 \Phi_{1}^k B_0\right] \leq E[B_0^T P B_0].
\end{align}
Adding $E[D_0^T D_0]$ and taking trace for this inequality further lead to
\begin{align}
&\hspace{-4mm}
E[{\rm tr}(D_0^T D_0)]+
\sum_{k=1}^{K}E\left[{\rm tr}(B_0^T (\Phi_{1}^k)^T C_{k}^T C_{k}
 \Phi_{1}^k B_0)\right]\notag\\
&\hspace{-4mm}
 \leq E[{\rm tr}(D_0^T D_0 + B_0^T P B_0)].
\end{align}
The left-hand side of this inequality is nothing but $s_K$ defined in (\ref{eq:sK}).
Hence, by letting $K\rightarrow \infty$, the above inequality leads us to 
\begin{align}
&
\|G\|_2^2
 \leq E[{\rm tr}(D_0^T D_0 + B_0^T P B_0)],
\end{align}
where the convergence of the associated infinite series is ensured by Theorem~\ref{th:conv}.
This, together with (\ref{eq:H2-anal-2}), leads to $\|G\|_2<\gamma$.

\medskip

1$\Rightarrow$2:
We prove this assertion by four steps.

(Step 1)
Since the system $G$ is exponentially stable in the second moment, 
there exists $\Pi>0$ satisfying
\begin{align}
&
E[\Pi-A_0^T \Pi A_0]>0 \label{eq:H2-anal-1-Pi}
\end{align}
by Theorem~\ref{th:stab}.

(Step 2)
It follows from Assumption~\ref{as:bound} and Definition~\ref{df:expo} that
there exist $a\in {\bf R}_+$ and $\lambda\in (0,1)$ satisfying
\begin{align}
&
E[x_k^T C_k^T C_k x_k] \leq a^2 E[\sigma_{\max}(C_0)^2] (x_0^T x_0)
 \lambda^{2k} \notag \\
& (\forall k \in {\bf
N}_0, \forall x_0 \in {\bf R}^n)
\end{align}
for $w\equiv 0$.
Let $b=a\sqrt{E[\sigma_{\max}(C_0)^2]}$.
Then, the above inequality implies
\begin{align}
&
E[(\Phi_0^{k})^T C_k^T C_k \Phi_0^{k}] \leq b^{2} \lambda^{2k} I \
 \ (\forall k \in {\bf
N}_0).
\end{align}
Since $\xi_k$ is i.i.d.\ with respect to $k$, for 
$Q:=E[C_0^T C_0]$,
this implies
\begin{align}
&
E[(\Phi_{k_1}^{k_2})^T Q \Phi_{k_1}^{k_2}] \leq b^{2}
 \lambda^{2(k_2-k_1)} I \notag \\
& (\forall k_1, k_2 \in {\bf
N}_0 \ \ {\rm s.t.} \ \ k_2 \geq k_1 \geq 0). \label{eq:anal-proof-epqp}
\end{align}

(Step 3)
Define
\begin{align}
 &
P^{k_2}_{k_1}:=\sum_{k=k_1}^{k_2}
(\Phi_{k_1}^{k})^T Q \Phi_{k_1}^{k}
\end{align}
for $k_1, k_2 \in {\bf
N}_0$ satisfying $k_2 \geq k_1 \geq 0$.
This $P^{k_2}_{k_1}$ naturally satisfies
\begin{align}
P^{k_2}_{k_1} - A_{k_1}^T P^{k_2}_{k_1+1} A_{k_1}=Q. \label{eq:papaq}
\end{align}
On the other hand, since $Q\geq 0$, 
the sequence of
\begin{align}
 &
E[P^{k_2}_{k_1}]=\sum_{k=k_1}^{k_2}
E[(\Phi_{k_1}^{k})^T Q \Phi_{k_1}^{k}]
 \label{eq:E-papaq}
\end{align}
with respect to $k_2=k_1, k_1+1, \ldots$ for 
each fixed $k_1$ is monotonically nondecreasing under the semiorder relation based on positive semidefiniteness, i.e., 
 $E[P^{k_2}_{k_1}]\leq E[P^{k_2+1}_{k_1}]$.
 In addition, it follows from (\ref{eq:anal-proof-epqp}) that
\begin{align}
 &
 E[P^{k_2}_{k_1}]\leq b^2\left(\sum_{k=k_1}^{k_2}\lambda^{2(k-k_1)}\right)I.
 \end{align}
Since $\lambda<1$, the right-hand side of this inequality converges to a constant matrix as $k_2 \rightarrow \infty$.
Hence, $E[P^{k_2}_{k_1}]$ also converges to a constant matrix as $k_2 \rightarrow \infty$ for each fixed $k_1$.
Since $\xi_k$ is i.i.d., this constant matrix is independent of $k_1$, and we denote it by $P^\prime$.
Then, it follows from (\ref{eq:papaq}) that
\begin{align}
&
E[P^\prime - A_0^T P^\prime A_0 - C_0^T C_0]= 0. \label{eq:H2-anal-1-P0}
\end{align}

(Step 4)
We have obtained $\Pi>0$ satisfying (\ref{eq:H2-anal-1-Pi})
and $P^\prime\geq 0$ satisfying (\ref{eq:H2-anal-1-P0}).
With those $\Pi$ and $P^\prime$, we construct $P$ satisfying 
(\ref{eq:H2-anal-1}) and (\ref{eq:H2-anal-2}).

Let us consider the case where $E[\sigma_{\max}(B_0)^2]= 0$.
Take
$P=P^\prime+\Pi$.
Then, this $P$ is positive definite and satisfies (\ref{eq:H2-anal-1}).
In addition,
(\ref{eq:H2-anal-2})  becomes $E[{\rm tr}(D_0^T D_0))]<\gamma^2$, which is automatically satisfied under condition~1 (recall the definition of $H_2$ norm).

Let us next consider the case where
$E[\sigma_{\max}(B_0)^2]> 0$.
Take $P=P^\prime+\beta \Pi$ with
$\beta=\frac{\gamma^2- \|G\|_2^2}{2E[{\rm tr}(B_0^T \Pi B_0)]}$.
This $P$ is also positive definite and satisfies (\ref{eq:H2-anal-1}) as in the above case since $\beta>0$.
It follows from the definitions of $H_2$ norm and $P^\prime$ that
\begin{align}
\|G\|_2^2=E[{\rm tr}(D_0^T D_0) + {\rm tr}(B_0^T P^\prime B_0)].
\end{align}
Hence, with the above $P$, 
\begin{align}
E[{\rm tr}(D_0^T D_0 + B_0^T P B_0)] %\notag \\
=&
\|G\|_2^2+\beta E[{\rm tr}(B_0^T \Pi B_0)] \notag \\
=&
\frac{\gamma^2 + \|G\|_2^2}{2}  \notag \\
<&
\gamma^2
\end{align}
holds.

Hence, in any case, there exists $P>0$ satisfying 
(\ref{eq:H2-anal-1}) and (\ref{eq:H2-anal-2}).
This completes the proof.
\end{proof}

\subsection{Numerically Tractable Condition}

Theorem~\ref{th:main-H2anal} gives an expectation-based inequality condition for $H_2$ analysis of stochastic systems.
This form of condition, however, is generally not tractable from the aspect of numerical computation since products involving the decision variable $P$ are in the expectation operation.
This issue can be resolved through the following theorem.

\begin{theorem}
\label{th:anal-trac}
For given $\gamma\in {\bf R}_+$ and $P\in {\bf S}^{n\times n}_+$,
the following two conditions are equivalent.
\begin{enumerate}
\item
The inequalities (\ref{eq:H2-anal-1}) and (\ref{eq:H2-anal-2}) hold.
\item
The inequalities
\begin{align}
& \hspace{-9mm}
P - \widetilde{A}^T (P\otimes I_{\bar{n}}) \widetilde{A} - E[C(\xi_0)^T C(\xi_0)]> 0, \label{eq:H2-anal-1-trac}\\
& \hspace{-9mm}
{\rm tr}\left(E[D(\xi_0)^T D(\xi_0) ]\right)+ {\rm tr}\left(E[B(\xi_0)B(\xi_0)^T] P\right) < \gamma^2 \label{eq:H2-anal-2-trac}
\end{align}
hold, where 
$\widetilde{A}$ is the matrix given by
\begin{align}
&\hspace{-5mm}
\widetilde{A} :=[\bar{A}_{1}^T, \ldots,
 \bar{A}_{n}^T]^T \in {\bf R}^{n\bar{n} \times n},\label{eq:def-tilX}\\
&\hspace{-5mm}
\bar{A}=:
\left[\bar{A}_{1}, \ldots, \bar{A}_{n}\right] \ \ 
(\bar{A}_{i} \in {\bf R}^{\bar{n}\times n}\ (i=1,\ldots,n))
\end{align}
with a matrix $\bar{A} \in {\bf R}^{\bar{n}\times n^2}\ (\bar{n}\leq n^2)$ satisfying
\begin{align}
&
\bar{A}^T \bar{A}=E\big[{\rm row}(A(\xi_0))^T
{\rm row}(A(\xi_0))\big].
\label{eq:equiv-rep-decom}
\end{align}
\end{enumerate}
\end{theorem}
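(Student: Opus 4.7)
The plan is to prove the equivalence by converting (\ref{eq:H2-anal-1}) and (\ref{eq:H2-anal-2}) into (\ref{eq:H2-anal-1-trac}) and (\ref{eq:H2-anal-2-trac}), respectively, via algebraic rewriting that removes $P$ from inside the expectation. Since $P$ is deterministic, this boils down to linearity of expectation, the cyclic property of the trace, and one key identity
\begin{equation*}
E[A(\xi_0)^T P A(\xi_0)] = \widetilde{A}^T (P \otimes I_{\bar{n}}) \widetilde{A}.
\end{equation*}

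For (\ref{eq:H2-anal-2}) $\Leftrightarrow$ (\ref{eq:H2-anal-2-trac}), I would handle the two terms separately. The $D$-term is immediate since $E[\cdot]$ and ${\rm tr}(\cdot)$ commute, giving $E[{\rm tr}(D(\xi_0)^T D(\xi_0))] = {\rm tr}(E[D(\xi_0)^T D(\xi_0)])$. For the $B$-term, I would apply the cyclic property ${\rm tr}(B^T P B) = {\rm tr}(P B B^T)$ and then pull the expectation past the deterministic $P$, obtaining ${\rm tr}(E[B(\xi_0) B(\xi_0)^T] P)$. Finiteness of every expectation involved is guaranteed by Assumption~\ref{as:bound}.

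The substantive step is the identity displayed above, since once it is in hand, (\ref{eq:H2-anal-1}) $\Leftrightarrow$ (\ref{eq:H2-anal-1-trac}) follows by also moving the expectation past $C^T C$ as in the previous step. To prove the identity, I would expand the quadratic form entrywise,
\begin{equation*}
A(\xi_0)^T P A(\xi_0) = \sum_{i,j=1}^n P_{ij}\, {\rm row}_i(A(\xi_0))^T {\rm row}_j(A(\xi_0)),
\end{equation*}
take expectation, and recognize $E[{\rm row}_i(A(\xi_0))^T {\rm row}_j(A(\xi_0))]$ as the $(i,j)$-block (of size $n\times n$) of the $n^2\times n^2$ matrix $E[{\rm row}(A(\xi_0))^T {\rm row}(A(\xi_0))]$. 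By (\ref{eq:equiv-rep-decom}) and the horizontal partition $\bar{A} = [\bar{A}_1, \ldots, \bar{A}_n]$, this block equals $\bar{A}_i^T \bar{A}_j$. Expanding the right-hand side in the same spirit, the vertical-stacking convention (\ref{eq:def-tilX}) writes $\widetilde{A} = [\bar{A}_1^T, \ldots, \bar{A}_n^T]^T$, and the $n \times n$ block form $(P \otimes I_{\bar{n}})_{ij} = P_{ij} I_{\bar{n}}$ yields the same double sum $\sum_{i,j} P_{ij}\, \bar{A}_i^T \bar{A}_j$, completing the identity.

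The main obstacle is purely notational: because (\ref{eq:equiv-rep-decom}) uses row-wise rather than the more common column-wise vectorization, and $\widetilde{A}$ is defined through a specific rearrangement of $\bar{A}$'s horizontal blocks into a vertical stack, care is needed to align the block indexing of $E[{\rm row}(A)^T {\rm row}(A)]$ with that of $P \otimes I_{\bar{n}}$. Existence of some $\bar{A}$ with $\bar{n} \leq n^2$ satisfying (\ref{eq:equiv-rep-decom}) is not itself a difficulty, since the matrix on the right of (\ref{eq:equiv-rep-decom}) is symmetric positive semidefinite with finite entries by Assumption~\ref{as:bound}, so a rank-$\bar{n}$ factor (e.g., of spectral or Cholesky type) always exists.
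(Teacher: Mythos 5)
Your proposal is correct and follows essentially the same route as the paper: the paper's proof simply invokes Lemma~2 of the cited reference for the identity $E[A(\xi_0)^T P A(\xi_0)] = \widetilde{A}^T (P\otimes I_{\bar{n}})\widetilde{A}$ and the commutativity of trace with expectation, and your entrywise expansion $\sum_{i,j} P_{ij}\,\bar{A}_i^T\bar{A}_j$ is precisely a self-contained derivation of that lemma. The block-indexing bookkeeping and the trace/cyclic-property argument for the second inequality both check out.
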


\medskip

\begin{proof}
The equivalence between (\ref{eq:H2-anal-1}) and (\ref{eq:H2-anal-1-trac}) can be confirmed by using Lemma~2 in \cite{Hosoe-Auto20}.
The equivalence between (\ref{eq:H2-anal-2}) and (\ref{eq:H2-anal-2-trac}) can be confirmed by using the properties of the trace operation and its commutativity with the expectation operation.
\end{proof}

Since $\widetilde{A}$ is independent of the decision variable $P$, the inequality condition (\ref{eq:H2-anal-1-trac}) and (\ref{eq:H2-anal-2-trac}) can be viewed as a standard LMI, once all the expectations are computed.
Since the LMI (\ref{eq:H2-anal-1-trac}) and (\ref{eq:H2-anal-2-trac}) (with $P$ viewed as a decision variable) gives a necessary and sufficient condition for $H_2$ analysis, minimizing $\gamma$ with respect to the LMI leads us to the $H_2$ norm of $G$ without conservativeness.

\section{$H_2$ State-Feedback Controller Synthesis}
\label{sc:syn}

This section extends the results about $H_2$ analysis in the preceding section toward state feedback synthesis.

\subsection{Synthesis Problem}

We first describe our synthesis problem.
Let us consider the stochastic process $\xi$ satisfying Assumption~\ref{as:iid} and the associated generalized plant
\begin{align}
&%\hspace{-3mm}
x_{k+1} = A_{\rm o}(\xi_k) x_k + B_{{\rm o}w}(\xi_k) w_k + B_{{\rm o}u}(\xi_k) u_k, \notag\\
&%\hspace{-3mm}
z_{k} = C_{\rm o}(\xi_k) x_k + D_{{\rm o}w}(\xi_k) w_k + D_{{\rm o}u}(\xi_k) u_k, \label{eq:open-sys}
\end{align}
where
$A_{\rm o}: {\boldsymbol {\mathit\Xi}} \rightarrow {\bf R}^{n\times n}$, 
$B_{{\rm o}w}: {\boldsymbol {\mathit\Xi}} \rightarrow {\bf R}^{n\times p_w}$, 
$B_{{\rm o}u}: {\boldsymbol {\mathit\Xi}} \rightarrow {\bf R}^{n\times p_u}$, 
$C_{\rm o}: {\boldsymbol {\mathit\Xi}} \rightarrow {\bf R}^{q_z\times n}$, 
$D_{{\rm o}w}: {\boldsymbol {\mathit\Xi}} \rightarrow {\bf R}^{q_z\times p_w}$ and 
$D_{{\rm o}u}: {\boldsymbol {\mathit\Xi}} \rightarrow {\bf R}^{q_z\times p_u}$ are matrix-valued Borel functions,
and the initial state $x_0$ is supposed to be deterministic.
In this plant, $u_k$ is the control input.
As is the case with $G$ without the control input, we introduce the following assumption on the coefficient matrices of the plant.
\begin{assumption}
\label{as:inf-syn}
The squares of entries of
$A_{\rm o}(\xi_0)$, $B_{{\rm o}w}(\xi_0)$, $B_{{\rm o}u}(\xi_0)$, $C_{\rm o}(\xi_0)$, $D_{{\rm o}w}(\xi_0)$ and $D_{{\rm o}u}(\xi_0)$ are all Lebesgue integrable.
\end{assumption}

Let us next consider the state-feedback controller 
\begin{align}
&
u_k=F x_k \label{eq:state-feedback}
\end{align}
with the static time-invariant gain $F\in {\bf R}^{p_u\times n}$.
Then, the closed-loop system $G_F$ consisting of the plant 
(\ref{eq:open-sys}) and this controller
is described by (\ref{eq:fr-sys}) with
\begin{align}
&
A(\cdot)=A_{\rm o}(\cdot)+B_{{\rm o}u}(\cdot)F,\ \ B(\cdot)=B_{{\rm
 o}w}(\cdot),\notag \\
&
C(\cdot)=C_{\rm o}(\cdot)+D_{{\rm o}u}(\cdot)F,\ \ D(\cdot)=D_{{\rm
 o}w}(\cdot).
\label{eq:closed-loop}
\end{align}
Obviously, if the plant satisfies Assumption~\ref{as:inf-syn}, then the corresponding $G_F$ satisfies Assumption~\ref{as:bound} for each fixed gain $F$.
This section tackles the problem of designing a gain $F$ minimizing the $H_2$ norm $\|G_F\|_2$ of the corresponding closed-loop system $G_F$.

\subsection{Synthesis-Oriented Inequality Condition}

For given $F$, the $H_2$ norm of the corresponding closed-loop system $G_F$ can be analyzed as an LMI optimization problem by Theorems~\ref{th:main-H2anal} and \ref{th:anal-trac}.
In the synthesis, however, the gain $F$ is also viewed as a decision variable, and hence, the inequality condition (\ref{eq:H2-anal-1-trac}) and (\ref{eq:H2-anal-2-trac}) becomes nonlinear in the decision variables.
To make matters worse, the direct linearization of the inequality condition is not straightforward since the decision variables are involved in the matrix $\widetilde{A}$ in a complicated form.
Hence, we return to the expectation-based inequality condition (\ref{eq:H2-anal-1}) and (\ref{eq:H2-anal-2}), and derive a numerically tractable synthesis-oriented inequality condition from it.

The following theorem is a main result about the LMI condition for $H_2$ performance synthesis.
\begin{theorem}
\label{th:syn-trac}
Suppose that the generalized plant (\ref{eq:open-sys}) satisfies Assumptions~\ref{as:iid} and \ref{as:inf-syn}.
For given $\gamma>0$,
there exists a gain $F$ such that the closed-loop system $G_F$ is exponentially stable in the second moment and satisfies $\|G_F\|_2<\gamma$ if and only if there exist $X\in {\bf S}^{n\times n}_+$, $Y\in {\bf R}^{p_u \times n}$ and $R\in {\bf S}^{p_w \times p_w}$ satisfying
\begin{align}
&
\begin{bmatrix}
X & * & * \\
\widetilde{A}_{\rm o} X + \widetilde{B}_{{\rm o}u} Y & X\otimes I_{\bar{n}_u} & * \\
\widetilde{C}_{\rm o} X + \widetilde{D}_{{\rm o}u} Y & 0 & I_{q_z\bar{q}_{zu}}
\end{bmatrix}>0,  \label{eq:syn-cond1}\\
&
\begin{bmatrix}
R-E_D & * \\
\widetilde{B}_{{\rm o}w} & X\otimes I_{\bar{p}_w}
\end{bmatrix}>0,   \label{eq:syn-cond2}\\
&
{\rm tr}(R)<\gamma^2  \label{eq:syn-cond3}
\end{align}
($*$ denotes the transpose of an appropriate submatrix), where
\begin{align}
E_D:=E[D_{{\rm o}w}(\xi_0)^T D_{{\rm o}w}(\xi_0)],
\label{eq:EDD}
\end{align}
and
$\widetilde{A}_{\rm o}$, $\widetilde{B}_{{\rm o}u}$,
$\widetilde{C}_{\rm o}$, $\widetilde{D}_{{\rm o}u}$ and
$\widetilde{B}_{{\rm o}w}$ are the matrices given by
\begin{align}
&
\widetilde{A}_{{\rm o}} :=[\bar{A}_{{\rm o}1}^T, \ldots,
 \bar{A}_{{\rm o}n}^T]^T \in {\bf R}^{n\bar{n}_u \times n},\notag \\
 &
\widetilde{B}_{{\rm o}u} :=[\bar{B}_{{\rm o}u1}^T, \ldots,
 \bar{B}_{{\rm o}un}^T]^T \in {\bf R}^{n\bar{n}_u \times p_u},\notag \\
&
\bar{G}_{AB}=:
\left[\bar{A}_{{\rm o}1}, \ldots, \bar{A}_{{\rm o}n},
\bar{B}_{{\rm o}u1}, \ldots, \bar{B}_{{\rm o}un}\right] \notag \\
&
(\bar{A}_{{\rm o}i} \in {\bf R}^{\bar{n}_u\times n}, 
\bar{B}_{{\rm o}ui} \in {\bf R}^{\bar{n}_u\times p_u}\ (i=1,\ldots,n)),\label{eq:tilAB}\\
%\end{align}%
%
%
%\begin{align}
&
\widetilde{C}_{{\rm o}} :=[\bar{C}_{{\rm o}1}^T, \ldots,
 \bar{C}_{{\rm o}q_z}^T]^T \in {\bf R}^{q_z\bar{q}_{zu} \times n},\notag\\
 &
\widetilde{D}_{{\rm o}u} :=[\bar{D}_{{\rm o}u1}^T, \ldots,
 \bar{D}_{{\rm o}u q_z}^T]^T \in {\bf R}^{q_z \bar{q}_{zu} \times p_u},\notag\\
&
\bar{G}_{CD}=:
\left[\bar{C}_{{\rm o}1}, \ldots, \bar{C}_{{\rm o}q_z},
\bar{D}_{{\rm o}u1}, \ldots, \bar{D}_{{\rm o}u q_z}\right] \notag \\
&
(\bar{C}_{{\rm o}i} \in {\bf R}^{\bar{q}_{zu}\times n}, 
\bar{D}_{{\rm o}ui} \in {\bf R}^{\bar{q}_{zu}\times p_u}\ (i=1,\ldots,q_z)),\label{eq:tilCD}\\
%\end{align}
%
%
%\begin{align}
&
\widetilde{B}_{{\rm o}w} :=[\bar{B}_{{\rm o}w1}^T, \ldots,
 \bar{B}_{{\rm o}wn}^T]^T \in {\bf R}^{n\bar{p}_w \times p_w},\notag\\
&
\bar{B}_{{\rm o}w}=:
\left[\bar{B}_{{\rm o}w1}, \ldots, \bar{B}_{{\rm o}wn}\right] \notag \\
& 
(\bar{B}_{{\rm o}wi} \in {\bf R}^{\bar{p}_w\times p_w}\ (i=1,\ldots,n))\label{eq:tilB}
\end{align}
with matrices 
$\bar{G}_{AB} \in {\bf R}^{\bar{n}_u\times (n+p_u)n}\ (\bar{n}_u\leq (n+p_u)n)$,
$\bar{G}_{CD} \in {\bf R}^{\bar{q}_{zu}\times (n+p_u)q_z}\ (\bar{q}_{zu}\leq (n+p_u)q_z)$
and
$\bar{B}_{{\rm o}w} \in {\bf R}^{\bar{p}_w\times np_w}\ (\bar{p}_w\leq np_w)$ satisfying
\begin{align}
&
\bar{G}_{AB}^T \bar{G}_{AB}=
E\big[[{\rm row}(A_{{\rm o}}(\xi_0)), {\rm row}(B_{{\rm o}u}(\xi_0))]^T\notag\\
&\hspace{25mm}\cdot
[{\rm row}(A_{{\rm o}}(\xi_0)), {\rm row}(B_{{\rm o}u}(\xi_0))]\big],
\label{eq:EBAAB}\\
%
%\end{align}
%
%
%\begin{align}
&
\bar{G}_{CD}^T \bar{G}_{CD}=
E\big[[{\rm row}(C_{{\rm o}}(\xi_0)), {\rm row}(D_{{\rm o}u}(\xi_0))]^T\notag\\
&\hspace{25mm}\cdot
[{\rm row}(C_{{\rm o}}(\xi_0)), {\rm row}(D_{{\rm o}u}(\xi_0))]\big],
\label{eq:EDCCD}\\
%\end{align}
%
%
%\begin{align}
&
\bar{B}_{{\rm o}w}^T \bar{B}_{{\rm o}w}=E\big[{\rm row}(B_{{\rm o}w}(\xi_0))^T
{\rm row}(B_{{\rm o}w}(\xi_0))\big].
\label{eq:EBB}
\end{align}
In particular, $F=YX^{-1}$ is one such gain.
\end{theorem}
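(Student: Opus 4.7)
The plan is to reduce the synthesis question to the analysis theorem (Theorem~\ref{th:main-H2anal}) applied to the closed-loop $G_F$, and then push through two transformations in sequence: first, eliminate the expectation operators using the decomposition idea underlying Theorem~\ref{th:anal-trac}; second, apply the classical change of variables $X := P^{-1}$, $Y := FX$ together with Schur complements to linearize in the decision variables $(X,Y,R)$. Substituting (\ref{eq:closed-loop}) into Theorem~\ref{th:main-H2anal}, the existence of $F$ with $\|G_F\|_2<\gamma$ is equivalent to the existence of $P>0$ satisfying
\begin{align*}
& E[P - (A_{\rm o} + B_{{\rm o}u}F)^T P (A_{\rm o} + B_{{\rm o}u}F) - (C_{\rm o} + D_{{\rm o}u}F)^T (C_{\rm o} + D_{{\rm o}u}F)] > 0,\\
& E[{\rm tr}(D_{{\rm o}w}^T D_{{\rm o}w} + B_{{\rm o}w}^T P B_{{\rm o}w})] < \gamma^2,
\end{align*}
which is the starting point of the argument.

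For the Lyapunov-type inequality, the crucial observation is that (\ref{eq:EBAAB}) supplies a joint square root of the second-moment matrix of $[{\rm row}(A_{\rm o}(\xi_0)),\,{\rm row}(B_{{\rm o}u}(\xi_0))]$, which encodes every cross-correlation $E[{\rm row}_i(A_{\rm o})^T {\rm row}_j(B_{{\rm o}u})]$ arising from the expansion of $(A_{\rm o}+B_{{\rm o}u}F)^T P (A_{\rm o}+B_{{\rm o}u}F)$. A direct block calculation, following the pattern of Lemma~2 in \cite{Hosoe-Auto20} that was already invoked for Theorem~\ref{th:anal-trac}, yields
\[
E[(A_{\rm o}+B_{{\rm o}u}F)^T P (A_{\rm o}+B_{{\rm o}u}F)] = (\widetilde{A}_{\rm o}+\widetilde{B}_{{\rm o}u}F)^T (P\otimes I_{\bar{n}_u})(\widetilde{A}_{\rm o}+\widetilde{B}_{{\rm o}u}F),
\]
and analogously $E[(C_{\rm o}+D_{{\rm o}u}F)^T(C_{\rm o}+D_{{\rm o}u}F)] = (\widetilde{C}_{\rm o}+\widetilde{D}_{{\rm o}u}F)^T(\widetilde{C}_{\rm o}+\widetilde{D}_{{\rm o}u}F)$ via (\ref{eq:EDCCD}). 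Congruence-multiplying the resulting inequality by $X=P^{-1}$ and noting that $(\widetilde{A}_{\rm o}+\widetilde{B}_{{\rm o}u}F)X = \widetilde{A}_{\rm o}X + \widetilde{B}_{{\rm o}u}Y$, a double Schur complement against the blocks $X\otimes I_{\bar{n}_u} = (P\otimes I_{\bar{n}_u})^{-1}$ and $I_{q_z\bar{q}_{zu}}$ produces exactly (\ref{eq:syn-cond1}); both the forward and reverse Schur steps are equivalences because $X>0$.

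For the performance side, I would introduce the slack matrix $R$ via $R > E_D + E[B_{{\rm o}w}^T P B_{{\rm o}w}]$ together with ${\rm tr}(R) < \gamma^2$; these two conditions together imply the trace inequality above. Rewriting the middle term using (\ref{eq:EBB}) as $\widetilde{B}_{{\rm o}w}^T(P\otimes I_{\bar{p}_w})\widetilde{B}_{{\rm o}w}$ and applying a single Schur complement gives (\ref{eq:syn-cond2}), while ${\rm tr}(R)<\gamma^2$ is (\ref{eq:syn-cond3}). The converse direction is handled by choosing $R := E_D + \widetilde{B}_{{\rm o}w}^T(P\otimes I_{\bar{p}_w})\widetilde{B}_{{\rm o}w} + \varepsilon I$ with $\varepsilon>0$ small enough that ${\rm tr}(R)<\gamma^2$, which is possible precisely because the strict $H_2$ bound leaves slack above the exact norm. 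I expect the principal obstacle to be the bookkeeping inside the decomposition identity: one must verify that splitting the joint square root $\bar{G}_{AB}$ exactly as in (\ref{eq:tilAB}) reproduces $\sum_{i,j} P_{ij} E[{\rm row}_i(A_{\rm o}+B_{{\rm o}u}F)^T {\rm row}_j(A_{\rm o}+B_{{\rm o}u}F)]$, and likewise for the $(C_{\rm o},D_{{\rm o}u})$ and $B_{{\rm o}w}$ terms; once that identity is established, the remaining steps (change of variables, Schur complements, and the slack-variable construction) are entirely routine.
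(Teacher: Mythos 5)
Your proposal is correct and follows essentially the same route as the paper's proof: apply Theorem~\ref{th:main-H2anal} to the closed loop, pre-compute the expectations via the joint square-root decompositions (\ref{eq:EBAAB})--(\ref{eq:EBB}) (Lemma~2 of \cite{Hosoe-Auto20}), introduce the slack variable $R$ for the trace bound, and linearize by Schur complements together with the congruence transformation and the change of variables $X=P^{-1}$, $Y=FX$. The only cosmetic difference is the order of the Schur-complement and congruence steps, and your explicit $\varepsilon$-construction of $R$ for the converse, which the paper leaves implicit.
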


\begin{proof}
By theorem~\ref{th:main-H2anal}, the inequality condition for $H_2$ performance of $G_F$ is given by
\begin{align}
&
E[P - (A_{\rm o}(\xi_0)+B_{{\rm o}u}(\xi_0)F)^T P (A_{\rm o}(\xi_0)+B_{{\rm o}u}(\xi_0)F) \notag \\
&
- (C_{\rm o}(\xi_0)+D_{{\rm o}u}(\xi_0)F)^T (C_{\rm o}(\xi_0)+D_{{\rm o}u}(\xi_0)F)]> 0, \label{eq:syn-pf1}\\
& %\hspace{-7mm}
E[{\rm tr}(D_{{\rm o}w}(\xi_0)^T D_{{\rm o}w}(\xi_0) + B_{{\rm o}w}(\xi_0)^T P B_{{\rm o}w}(\xi_0))] < \gamma^2.
\label{eq:syn-pf2}
\end{align}
Hence, it suffices for the present proof that for given $\gamma>0$, the existence of $P\in {\bf S}^{n\times n}_+$ and $F$ satisfying these two inequalities is equivalent to that of $X\in {\bf S}^{n\times n}_+$, $Y\in {\bf R}^{p_u \times n}$ and $R\in {\bf S}^{p_u \times p_u}$
satisfying (\ref{eq:syn-cond1})--(\ref{eq:syn-cond3}).

As is the case with aforementioned analysis, the expectation operation in (\ref{eq:syn-pf1}) and (\ref{eq:syn-pf2}) can be pre-calculated.
That is, by using Lemma~2 in \cite{Hosoe-Auto20}, those inequalities can be equivalently rewritten as
\begin{align}
&
P - (\widetilde{A}_{\rm o}+\widetilde{B}_{{\rm o}u}F)^T (P\otimes I_{\bar{n}_u}) (\widetilde{A}_{\rm o}+\widetilde{B}_{{\rm o}u}F) \notag \\
&
- (\widetilde{C}_{\rm o}+\widetilde{D}_{{\rm o}u}F)^T (\widetilde{C}_{\rm o}+\widetilde{D}_{{\rm o}u}F)> 0, \label{eq:syn-pf3}\\
& %\hspace{-7mm}
{\rm tr}(E_D + \widetilde{B}_{{\rm o}w}^T (P\otimes I_{\bar{p}_w}) \widetilde{B}_{{\rm o}w}) < \gamma^2,
\label{eq:syn-pf4}
\end{align}
where the coefficient matrices are given by (\ref{eq:EDD})--(\ref{eq:EBB}).
For given $P$, the satisfaction of (\ref{eq:syn-pf4}) is equivalent to the existence of $R\in {\bf S}^{p_u \times p_u}$ satisfying 
(\ref{eq:syn-cond3}) and
\begin{align}
&
E_D + \widetilde{B}_{{\rm o}w}^T (P\otimes I_{\bar{p}_w}) \widetilde{B}_{{\rm o}w} <R.
\label{eq:syn-pf5}
\end{align}
Then, it follows from the Schur complement technique that 
(\ref{eq:syn-pf3}) and (\ref{eq:syn-pf5}) are respectively equivalent to
\begin{align}
&
\begin{bmatrix}
P & * & * \\
(P\otimes I_{\bar{n}_u})(\widetilde{A}_{\rm o}  + \widetilde{B}_{{\rm o}u} F) & P\otimes I_{\bar{n}_u} & * \\
\widetilde{C}_{\rm o}  + \widetilde{D}_{{\rm o}u} F & 0 & I_{q_z\bar{q}_{zu}}
\end{bmatrix}>0,  \\
&
\begin{bmatrix}
R-E_D & * \\
(P\otimes I_{\bar{p}_w})\widetilde{B}_{{\rm o}w} & P\otimes I_{\bar{p}_w}
\end{bmatrix}>0
\end{align}
for given $P\ (>0)$, $F$ and $R$.
Hence,  congruence transformation with appropriate matrices and the change of variables $X=P^{-1}$ and $Y=FX$ lead us to 
(\ref{eq:syn-cond1}) and (\ref{eq:syn-cond2}).
This completes the proof.
\end{proof}

If the coefficient matrices of the plant (\ref{eq:open-sys}) are all deterministic, then $\bar{n}_u$, $\bar{q}_{zu}$ and $\bar{p}_w$ become $1$, and the corresponding inequality condition (\ref{eq:syn-cond1})--(\ref{eq:syn-cond3}) reduces to that for deterministic systems.
Hence, the present result is a solid extension of the conventional result for deterministic systems.
A similar comment also applies to the results about $H_2$ analysis discussed in the preceding section.

\section{Numerical Example}
\label{sc:exam}

\begin{figure}[b] %  figure placement: here, top, bottom, or page
   \centering
   \vspace{-5mm}
      \includegraphics[width=3.5in]{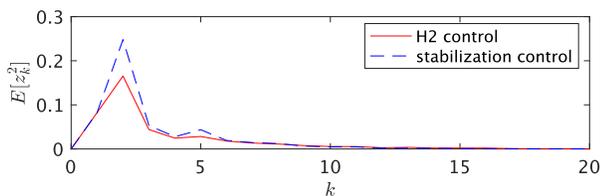}%
         \vspace{-4mm}
   \caption{Impulse responses with two gains.}%
   \label{fig:impulse}%
\end{figure}%

Let us consider the system (\ref{eq:open-sys}) with %the coefficient matrices
\begin{align}
&
A_{\rm o}(\xi_k)=
\begin{bmatrix}
1.3+\xi_{2k}&  0.8+\xi_{1k}    &  -0.5\\
   0.5  &    0.3+\xi_{1k}\xi_{2k} & -1.2+\xi_{1k}^2\\
   -0.2  &   0.8      &    0.6
\end{bmatrix},\notag\\
&
B_{{\rm o}w}(\xi_k)=B_{{\rm o}u}(\xi_k)=\begin{bmatrix}
0 & 0 & 1
  \end{bmatrix}^T,\notag \\
  & 
C_{{\rm o}}=\begin{bmatrix}
0 & \xi_{1k} & \xi_{2k}
  \end{bmatrix},\ \ 
 D_{{\rm o}w}
 =D_{{\rm o}u}
 =0, \notag \\
 &
\xi_{1k}\sim {\cal N}(0,0.2^2),\ \ \xi_{2k}\sim {\cal U}(-0.5,0.5),
\end{align}
where ${\cal N}(0,0.2^2)$ and ${\cal U}(-0.5,0.5)$ are the normal distribution with mean $0$
and standard deviation $0.2$ and the continuous uniform distribution
with minimum $-0.5$ and maximum $0.5$, respectively.
This system is unstable.
%Since $C_{{\rm o}}$ is given as above, the transient response of $x_{1k}$ of the system is expected to be improved (i.e., become smaller) by $H_{2}$ control.

For the above system, we designed two types of controllers: one is by the $H_{2}$ control in this paper, and the other is by the stabilization control in \cite{Hosoe-TAC19}.
Through minimizing $\gamma$ with respect to the LMI in Theorem~\ref{th:syn-trac}, we obtained $F=[1.6739,    0.1027,   -1.7100]$ with the minimal value of $\gamma=0.6520$, which corresponds to the closed-loop $H_{2}$ performance $\|G_{F}\|_{2}$.
On the other hand, through minimizing $\lambda$ in (\ref{eq:exp-def}) by the approach in \cite{Hosoe-TAC19}, we obtained $F=[2.1622,    0.4018,   -2.0782]$ with the minimal value of $\lambda=0.8385$.
To confirm the effectiveness of the $H_{2}$ control, we compare the impulse responses of the closed-loop systems with these two gains.
We generated $10^{3}$ sample paths of $\xi$, and calculated the corresponding output $z$ with each gain under the impulse input $w$.
Then, we obtained the time sequences of $E[z_{k}^{2}]$ shown in Fig.~\ref{fig:impulse}, where the expectation was approximated by the sample mean using the $10^{3}$ sample paths.
From this figure, we see that the transient response was successfully improved (i.e., became smaller in the sense of $H_{2}$ norm $\|G_{F}\|_{2}=\sum_{k=0}^{\infty}E[z_{k}^{2}]$) by the $H_{2}$ control.
%For reference, the post-analyzed value of $\|G_{F}\|_{2}$ for the stabilizing gain was 0.7207, which was obtained by Theorem~{***}.

%To use Theorem~\ref{***4***} for $H_{2}$ control, we first have to obtain the coefficient matrices of the simultaneous LMI (\ref{45})--(\ref{47}).
%...
%
%Then, we obtained ...
%LMIs can be solved with MATLAB, YALMIP \cite{YALMIP} and SDPT3 \cite{SDPT3}.
%Since Theorem~\ref{***4***} gives a necessary and sufficient condition, this result includes no conservativeness.
%Initial value response ... in fig.
%The approximate value of the $H_{2}$ norm calculated with the simulations is ..., where we replaced the expectations in (\ref{***6***}) by the corresponding sample mean.
%
%
%As another approach, we apply the results in \cite{***} about stabilization to the present system.
%Through minimizing $\lambda$ in (\ref{***3***}), which corresponds to ..., we obtained ...
%
%
%With this gain, we ...
%Then, ...
%Hence, compared to this stabilization, $H_{2}$ is good
%This can be also confirmed with Fig. ... about the initial value response of ...
%The ... in Fig. ... is suppressed compared with that in Fig. ...

\section{Conclusions}
\label{sc:concl}

This paper derived numerically tractable LMI conditions for $H_{2}$ performance analysis and controller synthesis for discrete-time linear systems with dynamics determined by an i.i.d.\ process.
Deriving LMI conditions for other control problems about the systems as well as
applying the results to remote control of vehicles are possible future works.

%%%%%%%%%%%%%%%%%%%%%%%%%%%%%%%%%%%%%%%%%%%%%%%%%%%%%%%%%%%%%%%%%%%%%%%%%%%%%%%%
%\section{ACKNOWLEDGMENTS}
%\section*{Appendix}

%%%%%%%%%%%%%%%%%%%%%%%%%%%%%%%%%%%%%%%%%%%%%%%%%%%%%%%%%%%%%%%%%%%%%%%%%%%%%%%%

\bibliographystyle{IEEEtran}
\bibliography{rds}

%\begin{thebibliography}{99}
%
%\bibitem{c1}
%J.G.F. Francis, The QR Transformation I, {\it Comput. J.}, vol. 4, 1961, pp 265-271.
%
%\bibitem{c2}
%H. Kwakernaak and R. Sivan, {\it Modern Signals and Systems}, Prentice Hall, Englewood Cliffs, NJ; 1991.
%
%\bibitem{c3}
%D. Boley and R. Maier, "A Parallel QR Algorithm for the Non-Symmetric Eigenvalue Algorithm", {\it in Third SIAM Conference on Applied Linear Algebra}, Madison, WI, 1988, pp. A20.
%
%\end{thebibliography}

\end{document}